\documentclass[a4paper]{article}   
\usepackage{amsmath}
\usepackage{amssymb}
\usepackage{amsthm}
\usepackage{color}
\newtheorem{theorem}{Theorem}
\newtheorem{definition}{Definition}
\newtheorem{corollary}{Corollary}

\usepackage{cite}
\date{}

\def\bea{\begin{eqnarray}}
\def\eea{\end{eqnarray}}
\def\be{\begin{equation}}
\def\ee{\end{equation}}
\def\bes{\begin{equation*}}
\def\ees{\end{equation*}}
\def\bs{\begin{split}}
\def\es{\end{split}}
\def\beas{\begin{eqnarray*}}
\def\eeas{\end{eqnarray*}}

\begin{document}
\title{Equivalence Groups and Differential Invariants for (2+1) dimensional Nonlinear Diffusion Equation}
\author{Saadet  \"Ozer }
\maketitle {} Istanbul Technical University, Faculty of Science and Letters,\\
Department of Mathematics Engineering, 34469 Maslak Istanbul
Turkey;\\
email: saadet.ozer@itu.edu.tr\\
\begin {abstract}
(2+1) dimensional
diffusion equation is considered within the framework of equivalence
transformations. Generators for the group are obtained and admissible transformations
between linear and nonlinear equations are examined. It is  shown that  transformations between  linear and  nonlinear equations are possible provided that the generators of independent variables depend on  the dependent variable.  Exact solutions for some nonlinear equations are obtained.
Differential invariants related to
the transformation groups are  investigated and the results are compared with the direct integration method.\\
\end{abstract}
{\bf keywords:}
Lie Group Application, Equivalence Groups , Exact Solution,  Nonlinear Diffusion Equation,  Differential Invariants

\section{Introduction}
Differential equations containing some arbitrary functions or
parameters represent actually  family of equations of the same
structure. Almost all field equations of classical continuum physics
possess this property related to the behaviour of
different materials.  In dealing with such family of differential
equations, Lie symmetry analysis provides some powerful algorithmic
methods for determination of invariant solutions, conserved
quantities and construction of maps between
differential equations of the same family that turns out to be
equivalent \cite{csuhubi2005explicit,oliveri2012equivalence,ozer2003equivalence}.
 To examine such problems, it is convenient considering equivalence transformation groups that preserve
 the structure of the family of differential equations but may change the form of the constitutive functions, parameters
 when appropriate transformations are available.\par
 The first systematic treatment that the usual Lie's infinitesimal invariance approach could be
employed in order to construct equivalence groups was formulated by
Ovsiannikov ~\cite{ovsiannikov2014group}. Then several well-developed methods have been used to construct equivalence groups. The general theory of determining
transformation groups and algorithms  can be found in the references
\cite{ovsiannikov2014group,olver2000applications,
ibragimov1999elementary,suhubi2013exterior}.
\par In the present text we shall examine the (2+1) dimensional diffusion equation. Nonlinear members of the family of diffusion equations have significant importance in many areas in applied sciences. A great number of members  have been
used to model many physical phenomena in Mathematical Physics, Mathematical Biology etc. and  they have been widely studied not only by means of numerical, asymptotic analysis but also
as application to Lie Group Analysis since
Lie \cite{lie1897uber}. 
The simple nonlinear
heat equation; $u_t=[A(u)u_x]_x$ was first examined by Ovsiannikov
\cite{ovsiannikov1959group} within the frame-work of Lie's symmetry classification. The complete classification  and form-preserving point transformations for the inhomogenous one dimensional nonlinear diffusion are obtained in \cite{sophocleous2003symmetries}. The conditions for reduction the more general diffusion type equations to  the one dimensional  heat equation are also examined in \cite{johnpillai2001singular}. 
Equivalence transformations of  linear diffusion equations into  nonlinear equations for some classes have   been considered in Lisle's PhD thesis \cite{lisle1992equivalence} widely. Constructing the exact analytical solutions for some specific problems related to the nonlinear diffusion equation have recently examined by  \cite{ivanova2010lie,torrisi2007class,popovych2004new}. Torrisi et al.  in their paper \cite{torrisi2007class} have also studied the developments of bacterial colonies as application to  equivalence groups. Bruz{\'o}n et al. have derived maps between nonlinear dispersive equations in their detailed work \cite{bruzon2012some}.
 \par
 The aim of the present work is to study equivalence transformations
for a general family of (2+1) dimensional diffusion
equation. We investigate the structure of the transformation group generators which lead to map linear and
nonlinear members. 
\par
For the convenience of the reader to follow, in section 2,  we
have obtained the generators of the group of equivalence transformations and   determined the  structure   for admissible transformations. Theorems in the section  state the conditions of  the appropriate equivalence  transformations   between linear and nonlinear members . In the next
section, we have considered some subgroups of the general equivalence groups and  by choosing some specific forms we
are able to obtain some classes of nonlinear equations which are
equivalent to linear ones. We have also investigated the classes of  nonlinear diffusion equations that are mapped onto the classical heat equation. Exact solutions for those nonlinear equations are also obtained. In section 4 we have also examined
differential invariants for the subgroups and discussed the results with the direct integration method.
\section{Equivalence Transformations}
In the present paper we shall investigate the equivalence group a general family of (2+1) dimensional diffusion equation
\be
\label{main}
u_t=f(x,y,t,u,u_x,u_y)_x +g(x,y,t,u,u_x,u_y)_y
\ee
 which represents a great variety  of linear and nonlinear equations. Here $u$ is the dependent variable of the independent variables $x,y$ and $t$. Here $f$ and $g$
are smooth nonconstant functions of their variables and subscripts
denote the partial derivatives with respect to the corresponding variables.
\begin{definition}
With $n$ independent variables $x_i$, $N$ dependent variables $u_\alpha$ and $m$ smooth functions $\phi_k$ of independent, dependent variables and their derivatives
$$
{\cal F}(x_i,u_{\alpha(p)},\phi_{k(q)}(x_i,u_{\alpha (p)}))=0
$$
is called  a family of differential equations. Here
 $ i=1,2...n,\ \alpha=1,2,...N,\ k=1,2,...m$ and $u_{\alpha (p)}$ include both the tuple of dependent variables $u=(u_1,u_2,...,u_N)$ as well as all the derivatives of $u$ with respect to $x_i$'s up to order $N$. By $\phi_{k(q)}$ we denote the smooth functions $\phi_k$ and  the partial derivatives with respect to both $x_i$'s and $u_{\alpha (p)}$'s.
\end{definition}
 \begin{definition}
For a given differential equation of the family
$$
{\cal F}(x_i,u_{\alpha(p)},\phi_{k(q)}(x_i,u_{\alpha (p)}))=0
$$
the equivalence group $\cal{E}$  is the group of smooth transformations of independent, dependent variables, their derivatives  and smooth functions preserving the structure of the differential equation but transforms it into
$${\cal F}(\bar x_i,\bar u_{\alpha(\bar p)},\bar\phi_{k(\bar q)}(\bar x_i,\bar u_{\alpha (\bar p)}))=0. $$
 \end{definition}
 More precisely, equivalence transformations associated with the (2+1) dimensional most general
diffusion equation \eqref{main} transform the equation into
\begin{align*}
& u_t-f(x,y,t,u,u_x,u_y)_x -g(x,y,t,u,u_x,u_y)_y=0\ \longrightarrow \\
& \bar
u_{\bar t}-\bar f(\bar x,\bar y,\bar t,\bar u,\bar u_{\bar x},\bar
u_{\bar y})_{\bar x} -\bar g(\bar x,\bar y,\bar t,\bar u,\bar
u_{\bar x},\bar u_{\bar y})_{\bar y}=0.
\end{align*}
where $\bar {(.)}$ represents the transformed variables and functions.\par
Let $\mathcal{M}=\mathcal{N}\times \mathbb R$  be a (2+1) dimensional manifold with a local coordinate system $\mathbf{x}=( x_{i}) =( x,y,t)$ which we shall call as the
space of independent variables. Consider a trivial bundle structure $(
\mathcal{K},\pi ,\mathcal{M}) $ with fibers are the real line $\mathbb{R}$. Here $\mathcal{M}$ is the base manifold and $\mathcal{K}$, called the
graph space is globally in form of a product manifold $\mathcal{M}\times \mathbb{R}$. We equip the four dimensional graph space $\mathcal{K}$ with
the local coordinates $( \mathbf{x},u) =( x,y,t,u) $.
\par
 A vector field on the graph space $\mathcal{K}$ is a section of its tangent
bundle and locally in form
 \be
V=\xi^1\frac{\partial}{\partial x}+\xi^2\frac{\partial}{\partial y}+\xi^3\frac{\partial}{\partial t}+\eta\frac{\partial}{\partial u}
\ee
 where $\xi^i\ (i=1,2,3)$ and $\eta$ are coordinate functions on $\mathcal{K}$.
\par
 In order to construct the equivalence groups $\mathcal{E}$ for the equation \eqref{main}, first we  extend the graph space $\mathcal{K}$  by adding the auxiliary variables and the ones representing the functional dependencies of the smooth functions $f$ and $g$
 \be
 \label{manifold}
 \tilde{\mathcal{K}}=\{x,y,t,u,f,g,u_x,u_y,u_t,f_x,f_y,f_t,f_u,g_x,g_y,g_t,g_u,f_{u_x},f_{u_y},g_{u_x},g_{u_y}\}.
 \ee
The prolongation vector $\tilde V$ over the extended manifold covered by $\tilde K$   can be written as
\bea \label{prolonged}\bs
 &\tilde{V} =V+\mu^1 \frac{\partial }{{\partial f }} +\mu^2  \frac{\partial }{{\partial g }} +\zeta^1
   \frac{\partial }{{\partial u_x }} +\zeta^2  \frac{\partial }{{\partial u_y   }}
   + \zeta^3\frac{\partial }{{\partial u_t }} + \sum_{j=1}^4\mu_j^1 \frac{\partial }
   {{\partial f_j }}  \\
 &\qquad+\sum_{j=1}^4\mu_j^2 \frac{\partial }
   {{\partial g_j }}+\sum_{j=1}^3\nu_j^1 \frac{\partial }
   {{\partial f_{u_j} }}+\sum_{j=1}^3\nu_j^2 \frac{\partial }
   {{\partial g_{u_j} }}
 \end{split}
\eea
where in the last four summations $j=1,..4$ represent $x,y,t$ and  $u$ and all coefficient functions are smooth functions of the coordinates of the extended manifold.  
\begin{theorem}[\cite{suhubi2013exterior}]
Let a vector field on an $n$ dimensional differentiable manifold $M$ be given by $$V(p)=v^i({\bf x})\frac{\partial}{\partial x^i}, \quad p=\varphi^{(-1)}({\bf x}),\quad i=1,...,n$$
where $(U,\varphi)$ is the chart to which $p\in M$ belongs. A curve $\gamma$ is an integral curve of the vector field $V$ iff the coordinate functions $x^i(t)$ are solutions of the following system of local ordinary differential equations in $\mathbb R^n$
$$
\frac{d x^i}{dt}=v^i({\bf x}(t))
$$
\end{theorem}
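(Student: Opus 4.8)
The plan is to unwind the definition of an integral curve in the given chart and to observe that, component by component, the defining relation becomes exactly the stated system of ordinary differential equations. Recall that a curve $\gamma : I \to M$, with $I \subseteq \mathbb{R}$ an open interval, is by definition an integral curve of $V$ precisely when its velocity vector agrees with the vector field along it, that is, $\dot\gamma(t) = V(\gamma(t))$ for every $t \in I$ with $\gamma(t)$ lying in the domain of $V$.

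First I would pass to the chart $(U,\varphi)$ and introduce the coordinate representation $\mathbf{x}(t) = \varphi(\gamma(t)) = (x^1(t),\dots,x^n(t))$ of the curve, defined on the open subset of $I$ that $\gamma$ maps into $U$. The velocity vector $\dot\gamma(t)$ is the pushforward $\gamma_{*,t}(d/dt)$ of the standard basis vector of $T_t\mathbb{R}$; evaluating it on a coordinate function $x^i$ and applying the chain rule gives $\dot\gamma(t)[x^i] = \frac{d}{dt}(x^i\circ\gamma)(t) = \frac{dx^i}{dt}$. Hence, in the coordinate frame $\{\partial/\partial x^i\}$ taken along the curve, one obtains the expansion $\dot\gamma(t) = \frac{dx^i}{dt}\,\frac{\partial}{\partial x^i}$, the frame being understood as evaluated at $\gamma(t)$.

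Next I would expand the right-hand side of the defining relation in the same frame: by hypothesis $V(\gamma(t)) = v^i(\mathbf{x}(t))\,\frac{\partial}{\partial x^i}$, again evaluated at $\gamma(t)$. Since the vectors $\partial/\partial x^i$ evaluated at $\gamma(t)$ form a basis of the tangent space $T_{\gamma(t)}M$, the vector identity $\dot\gamma(t) = V(\gamma(t))$ holds if and only if the two component tuples coincide, i.e. $\frac{dx^i}{dt} = v^i(\mathbf{x}(t))$ for each $i = 1,\dots,n$. Reading this equivalence one way shows that the coordinate functions of any integral curve solve the system; reading it the other way shows that any curve whose coordinate functions solve the system is an integral curve. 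This is precisely the assertion of the theorem.

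There is no substantial obstacle here: the whole content is the chain-rule formula for the velocity vector together with the linear independence of a coordinate frame. The only point that deserves a word of care is the bookkeeping about domains --- the equivalence is established only on the part of $I$ that $\gamma$ maps into the single chart $U$ --- and one may add the remark that, because the $v^i$ are smooth, the standard existence and uniqueness theorem for systems of ordinary differential equations guarantees that integral curves through each point actually exist and are unique, so that the coordinate description is not vacuous.
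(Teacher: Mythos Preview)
Your argument is correct and is exactly the standard proof: write the curve in local coordinates, compute its velocity via the chain rule, compare components against the coordinate expression of $V$, and use linear independence of the coordinate frame to conclude the equivalence. The remark about domains is appropriate, and the existence-and-uniqueness comment, while not required for the ``iff'' statement, is a natural addendum.

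As for comparison with the paper: there is nothing to compare. The paper states this theorem with a citation to \cite{suhubi2013exterior} and gives no proof of its own; it is quoted only as a tool to justify passing to the system of ODEs \eqref{system}. Your proposal therefore supplies a proof where the paper supplies none.
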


Precisely,  the equivalence transformations  can be determined by
solving the following system of autonomous ordinary differential equations on the extended manifold \eqref{manifold} \bea
\label{system}
\begin{split}
 &\frac{{d\bar x}}{{d\epsilon }} = \xi^1 (\bar x,\bar y,\bar t,\bar u ),
 \qquad\frac{{d\bar y}}{{d\epsilon }} = \xi^2 (\bar x,\bar y,\bar t,\bar u ),
 \qquad \frac{{d\bar t}}{{d\epsilon }} =\xi^3 (\bar x,\bar y,\bar t,\bar u ),\\
 &\frac{{d\bar u}}{{d\epsilon }} = \eta(\bar x, \bar y,\bar t,\bar u ),
 \qquad \frac{{d\bar f }}{{d\epsilon }} = \mu^1,
 \qquad\frac{{d\bar g }}{{d\epsilon }} = \mu^2 \\
 \end{split}
\eea under the initial conditions
 \be
 \label{initial}
 \bar x (0) = x ,\quad \bar
y(0)=y,\quad \bar t(0)=t,\quad \bar u(0) = u,\quad \bar f (0) = f
,\quad\bar g (0) = g \ee where $\mu^1$ and $\mu^2$ depend on $(\bar x,\bar y,\bar t,\bar u,\bar u_{\bar x}, \bar u_{\bar y},\bar u_{\bar t},\bar f,\bar g)$.
\\
\par 
Coefficients of the prolonged vector field \eqref{prolonged}, namely the infinitesimal generators for the equivalence group can be evaluated by the very well-known prolongation formula (for detailed information, theorems and detailed applications see \cite{olver2000applications,akhatov1991nonlocal}). For more details we  refer the reader the papers \cite{ibragimov1991preliminary,
torrisi1998equivalence,romano1999application} and references therein which are concerned  with equivalence transformations.
\bea
\label{gen1}
\bs
& \zeta^1=D_x(\eta)-u_x D_x(\xi^1)-u_y D_x(\xi^2)-u_t D_x(\xi^3),\\
& \zeta^2=D_y(\eta)-u_x D_y(\xi^1)-u_y D_y(\xi^2)-u_t D_y(\xi^3),\\
& \zeta^3=D_t(\eta)-u_x D_t(\xi^1)-u_y D_t(\xi^2)-u_t D_t(\xi^3)
 \end{split}
\eea
where $D_x,D_y,D_t$ denote the total derivatives with respect to their parameters: $D_i=\frac{\partial}{\partial x_i}
+u_{x_i}\frac{\partial}{\partial u}$. And
\bea
\label{gen2}\bs
&\mu_j^1=\tilde{D}_j(\mu^1)-\sum_{i=1}^3f_i\tilde{D}_j(\xi^i)-f_u\tilde{D}_j(\eta)-\sum_{i=1}^3f_{u_i}\tilde{D}_j(\zeta^i),\\
&\mu_j^2=\tilde{D}_j(\mu^2)-\sum_{i=1}^3g_i\tilde{D}_j(\xi^i)-g_u\tilde{D}_j(\eta)-\sum_{i=1}^3g_{u_i}\tilde{D}_j(\zeta^i),\\
& \nu_j^1=\tilde{D}_{u_j}(\mu^1)-\sum_{i=1}^3 f_{u_i} \tilde{D}_{u_j}(\zeta^i),\qquad\\
&\nu_j^2=\tilde{D}_{u_j}(\mu^2)-\sum_{i=1}^3 g_{u_i} \tilde{D}_{u_j}(\zeta^i)\qquad
 \end{split}
\eea
where  $\tilde{D}_j=\frac{\partial}{\partial x_j}$ and $\tilde{D}_{u_j}=\frac{\partial}{\partial {u_j}}$.
\par
These expressions  do not impose any restriction on functional dependencies of the smooth functions $f$ and $g$ in the main equation \eqref{main}. If some variables do not appear in the coordinate cover of
the extended manifold \eqref{manifold}, due to a particular structure of the given differential  equation, that might entail some restrictions on the extended vector field components because the
corresponding  components must then be set to zero. Note that  in equation (1) the free parameters $f$ and $g$ do not depend on $u_t$. Thus their corresponding components in \eqref{prolonged} must vanish:
\be
\label{restriction}
\mu_3^1=\mu_3^2=0.
\ee
\begin{theorem}
A nonlinear (2+1) dimensional diffusion equation can be mapped onto a linear equation by a point equivalence transformation, if and only if it is in the following form:
 \bes 
 \bar x=\phi(x,y,t,u)\quad  {\text {and/or} }\quad  \bar y=\psi(x,y,t,u)
 \ees where $\phi, \psi
\in C^2$ and $\frac{\partial \phi}{\partial u}\neq 0$,
$\frac{\partial \psi}{\partial u}\neq 0$.
\end{theorem}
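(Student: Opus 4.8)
The plan is to analyse how an arbitrary point equivalence transformation of \eqref{main} acts on the divergence (conservation) structure of the equation, and then to split into two exhaustive cases according to whether the new space variables depend on $u$. From the equivalence generators obtained earlier in this section, such a transformation is the flow of $V=\xi^1\partial_x+\xi^2\partial_y+\xi^3\partial_t+\eta\partial_u$ with $\xi^1,\xi^2,\eta$ depending on $(x,y,t,u)$ and $\xi^3=\xi^3(t)$, so the finite transformation $T$ has the form $\bar t=\bar t(t)$, $\bar x=\phi(x,y,t,u)$, $\bar y=\psi(x,y,t,u)$, $\bar u=\chi(x,y,t,u)$ with nonsingular total Jacobian. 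Since the equivalence transformations form a group, $T^{-1}$ is again of this type; hence it suffices to start from a linear member of \eqref{main} in the barred variables, pull it back, and decide when the image is still linear. The two cases are: (A) $\phi_u=\psi_u=0$; (B) $\phi_u\neq0$ and/or $\psi_u\neq0$.

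Case (A) is the heart of the matter. Here $(x,y,t)\mapsto(\bar x,\bar y,\bar t)$ is a genuine point change of the independent variables alone, with all Jacobian factors independent of $u$, and $\bar u=\chi(x,y,t,u)$. I would substitute $\bar u$ and its barred derivatives, computed by the chain rule, into a general linear member $\bar u_{\bar t}=\bar f_{\bar x}+\bar g_{\bar y}$ of \eqref{main} and then normalise so that $u_t$ stands alone on the left. The coefficients of $u_{xx},u_{xy},u_{yy}$ in the resulting equation are then functions of $(x,y,t)$ only; but for an equation written in divergence form $u_t=f(x,y,t,u,u_x,u_y)_x+g(x,y,t,u,u_x,u_y)_y$ that constraint already forces $f$ and $g$ to be affine in $(u_x,u_y)$, so such an equation can contain no quadratic terms $u_{x_i}u_{x_j}$ whatsoever. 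Yet the same substitution does produce exactly such a term, with coefficient equal to $\chi_{uu}$ times a function that is nowhere zero by parabolicity (non-degeneracy of the second-order part) and invertibility of the transformation. Hence $\chi_{uu}=0$, i.e.\ $\bar u=\alpha(x,y,t)\,u+\beta(x,y,t)$, and $T$ reduces to a base point change composed with an $x,y,t$-dependent affine rescaling of $u$, which sends linear members of \eqref{main} to linear members and nonlinear ones to nonlinear ones. So no linear--nonlinear link can occur in case (A); this is the ``only if'' direction -- a point equivalence transformation carrying a nonlinear member onto a linear one must have $\phi_u\neq0$ and/or $\psi_u\neq0$.

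For the converse, in case (B) I would check that such transformations genuinely realise the link. Pulling back a linear member $\bar u_{\bar t}=\bar f_{\bar x}+\bar g_{\bar y}$ by the inverse of a $T$ with, say, $\phi_u\neq0$, the variable $u$ now enters the new independent variable through $\phi$ and hence enters the coefficients of the transformed equation; since $\phi_u\neq0$ this dependence is genuine and cannot be transformed away, so the image -- automatically of the form \eqref{main} because $T$ is an equivalence transformation -- is a bona fide nonlinear diffusion equation. Concrete families produced in this way are exhibited in the next section. Combined with case (A), this establishes the stated equivalence.

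The step I expect to be the main obstacle is the forced affinity $\chi_{uu}=0$ in case (A): one must carry the chain-rule expansion of the second $\bar x$- and $\bar y$-derivatives of $\bar u$ through the requirement that the image again be a divergence-form diffusion equation whose fluxes $f$ and $g$ depend on at most first-order derivatives of $u$, and read off from it precisely the identity that kills $\chi_{uu}$. The remainder is routine manipulation of the prolongation formulas \eqref{gen1}--\eqref{gen2} and the change-of-variables rules.
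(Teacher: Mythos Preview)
Your approach is genuinely different from the paper's. The paper works entirely at the infinitesimal level: it imposes the auxiliary constraints $\mu_3^1=\mu_3^2=0$ on the prolongation formulas \eqref{gen1}--\eqref{gen2}, solves the resulting determining system, and obtains directly the relation
\[
\eta_u=\xi^1_x+\xi^2_y-\dot{\xi^3}+s(t).
\]
From this single identity one reads off immediately that if $\xi^1_u=\xi^2_u=0$ then $\eta_{uu}=0$, so $\eta$ is affine in $u$ and the finite transformation preserves linearity; conversely, allowing $\xi^1$ or $\xi^2$ to depend on $u$ is exactly what lets $\eta$ be nonlinear in $u$. No chain-rule substitution into a target equation is needed; the constraint drops out of the determining equations for the equivalence algebra itself. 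Your route, by contrast, takes the finite form of the transformation as given, substitutes into a linear target, and extracts $\chi_{uu}=0$ from the structure of the pulled-back equation. This is more hands-on and avoids the prolongation machinery, but it re-derives at the finite level precisely what \eqref{result} already encodes infinitesimally.

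One point in your Case (A) needs tightening. The claim that the divergence form with $(x,y,t)$-only principal coefficients ``can contain no quadratic terms $u_{x_i}u_{x_j}$ whatsoever'' is slightly too strong: the visible constraints are on $f_{u_x}$, $g_{u_y}$ and the \emph{sum} $f_{u_y}+g_{u_x}$, so a priori $f_{u_y}$ and $g_{u_x}$ separately could still depend on $(u,u_x,u_y)$, and a $u_xu_y$ term is not immediately excluded. What \emph{is} excluded are the pure squares $u_x^2$ and $u_y^2$, and that already suffices: the chain-rule expansion of $\bar u_{\bar x\bar x}$ (or $\bar u_{\bar y\bar y}$) produces a $\chi_{uu}\,u_x^2$ (resp.\ $\chi_{uu}\,u_y^2$) contribution with a nonvanishing coefficient by non-degeneracy, forcing $\chi_{uu}=0$. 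Once you phrase it this way the argument goes through. The converse direction (your Case (B)) is handled in the paper only implicitly, by exhibiting explicit nonlinear images in Section~3; your statement of it is actually more complete than the paper's.
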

\begin{proof}
To generate the transformations of the  equivalence group for the diffusion equation given by \eqref{main}, we shall apply the restrictions  \eqref{restriction}   to the given formulas  \eqref{gen1} and \eqref{gen2}.
Then we have
 \bea
 \label{generators}
\begin{split}
&\xi^1=\xi^1(x,y,t,u), \quad \xi^2=\xi^2(x,y,t,u),\quad  \xi^3=\xi^3(t),\quad \eta=\eta(x,y,t,u),\\
&\zeta^1= \eta_x+(\eta_u+\xi^1_x)u_{x}+\xi^1_u (u_x)^2+\xi^2_x u_y+\xi^2_u u_x u_y,\\
&\zeta^2= \eta_y+(\eta_u+\xi^2_y)u_y+\xi^1_y u_x+\xi^1_u u_x u_y+\xi^2_u (u_y)^2,\\
&\zeta^3= \eta_t+(\eta_u+\dot{\xi^3})u_t+\xi^1_t u_x+\xi^1_u u_x u_t+\xi^2_tu_y+\xi^2_u u_y u_t,\\
&\mu^1=(\eta_u+\dot{\xi^3}-\xi^1_x+\xi^2_uu_y)f-(\xi^1_y+\xi^2_u u_y)g+\gamma u_y+\kappa^1,\\
&\mu^2=(\eta_u+\dot{\xi^3}-\xi^2_y+\xi^1_u u_x)g-(\xi^2_x+\xi^2_u u_x)f-\gamma u_x+\kappa^2
\end{split}
\eea
and
\be
\label{result}
\eta_u=\xi^1_x+\xi^2_y-\dot{\xi^3}+s(t)\ee
 where $s(t)$ is an arbitrary continuous function, $\gamma,\ \kappa^1$ and $\kappa^2$ depend on the independent and dependent variables and  satisfy the following relations $$ \xi^1_t=-\gamma_y+\kappa^1_u,\qquad
\xi^2_t=\gamma_x+\kappa^2_u,\qquad \eta_t=\kappa^1_x+\kappa^2_y.$$
Equation  \eqref{result} points out that  $\eta$ can not depend on $u$ nonlinearly, unless $\xi^1$
and/or $\xi^2$  depend on $u$. Thus to have any transformation between linear and nonlinear equations  $\xi^1$ and/or $\xi^2$ must involve the dependent variable $u$  which meant to be the transformed variables $\bar x$ and /or $\bar y$ involve $u$:
\bes
\xi^1=\xi^1(x,y,t,u), \ {\text{and/or}} \ \xi^2=\xi^2(x,y,t,u),\quad \frac{\partial \xi^i}{\partial u}\neq 0, \quad i=1,2.
\ees
  Thus transformed variables are obtained via the solution of the set of ordinary differential equations \eqref{system} as
  \bes
   \bar x=\phi(x,y,t,u)\quad  {\text {and/or} }\quad  \bar y=\psi(x,y,t,u).
 \ees
   
\end{proof}
\par
Maps between nonlinear and linear equations might  sometimes be generated by some  nonlinear  dependencies on only the dependent variable;  like the very well-known wave equation (see \cite{torrisi2004linearization,sophocleous2008differential,
ibragimov2010quasi}).  
The following theorem points out such transformations are not admissible for diffusion equation.
\begin{theorem}
(2+1) dimensional diffusion equation does not admit the following type of transformation:
\bes
\bar x=\psi_1(x,y,t),\ \bar y=\psi_2(x,y,t),\ \bar t=\psi_3(t),\ \bar u=\psi_4(x,y,t,u)\ees
where $\psi_4(x,y,t,u)$ is nonlinear in $u$.
\begin{proof}
One can easily see  from the equation \eqref{result}, unless $\xi^1$ or $\xi^2$ depend on $u$, $\eta$ can not involve $u$. Thus $\bar u$ can not be nonlinear in $u$ when $\bar x$ or $\bar y$ does not involve $u$. 
\end{proof}
\end{theorem}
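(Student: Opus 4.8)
The plan is to reduce the claim to the single determining relation \eqref{result} already extracted in the proof of the preceding theorem, so that essentially no new computation is required. First I would recast the hypothesised finite transformation in infinitesimal terms: a point equivalence transformation of the form $\bar x=\psi_1(x,y,t)$, $\bar y=\psi_2(x,y,t)$, $\bar t=\psi_3(t)$, $\bar u=\psi_4(x,y,t,u)$ lies in a one-parameter subgroup whose generator, by the form \eqref{generators} already established, has $\xi^1=\xi^1(x,y,t)$, $\xi^2=\xi^2(x,y,t)$, $\xi^3=\xi^3(t)$, $\eta=\eta(x,y,t,u)$; the crucial point is that $\bar x$ and $\bar y$ being free of $u$ forces $\partial\xi^1/\partial u=\partial\xi^2/\partial u=0$. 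This is exactly the case in which the right-hand side of \eqref{result}, namely $\xi^1_x+\xi^2_y-\dot{\xi^3}+s(t)$, is a function of $(x,y,t)$ alone.

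Next I would integrate in $u$: since $\eta_u$ is then independent of $u$, the function $\eta$ must be affine in $u$, $\eta=a(x,y,t)\,u+b(x,y,t)$. I would then invoke the integral-curve theorem quoted above to reconstruct the finite transformation from the autonomous system \eqref{system} under the initial data \eqref{initial}. Because $\xi^1,\xi^2,\xi^3$ do not involve $u$, the equations for $\bar x,\bar y,\bar t$ decouple and can be solved first; substituting the result into $d\bar u/d\epsilon=a(\bar x,\bar y,\bar t)\,\bar u+b(\bar x,\bar y,\bar t)$ leaves, along each characteristic, a linear inhomogeneous scalar ODE in $\bar u$ whose solution is an affine function of the initial value $u$, with coefficients depending only on $(x,y,t)$ and on $\epsilon$. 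Hence $\psi_4(x,y,t,u)$ is necessarily affine in $u$, contradicting the assumed nonlinearity of $\psi_4$ and proving the theorem for one-parameter subgroups; since a composition of maps that are affine in $u$ with $\bar x,\bar y$ independent of $u$ is again of this type, the statement extends to the full connected equivalence group.

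The one step I would be most careful about is precisely this passage from the infinitesimal to the finite level: one has to be sure that every admissible transformation of the stated form genuinely arises as (a composition of) flows of generators subject to \eqref{result}, rather than as some exceptional discrete map. In the present setting this is not a real obstacle, because \eqref{result} is part of the determining system for \emph{all} equivalence generators and the class ``$\bar x,\bar y$ free of $u$, $\bar u$ affine in $u$'' is closed under composition and inversion; so the argument is essentially the short one already indicated, made quantitative by exhibiting the affine form of $\eta$ and of its flow.
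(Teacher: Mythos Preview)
Your argument follows the same route as the paper: both rest entirely on the determining relation \eqref{result}. The paper's proof is a one-liner which asserts that if $\xi^1,\xi^2$ are free of $u$ then ``$\eta$ can not involve $u$'' and hence $\bar u$ cannot be nonlinear in $u$. Your version is a more careful expansion of exactly this idea; in particular you correctly sharpen the paper's slightly loose phrasing (strictly speaking \eqref{result} only forces $\eta_u$ to be independent of $u$, i.e.\ $\eta$ affine in $u$, not $\eta$ free of $u$), and you make explicit the infinitesimal-to-finite step via the linear ODE $d\bar u/d\epsilon=a\bar u+b$, together with closure under composition. So: same approach, with the gaps the paper leaves implicit filled in.
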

\begin{corollary}
Any integrable transformation related to the generators; $\xi^1=\xi^1(x,y,t),\ \xi^2=\xi^2(x,y,t),\ \xi^3=\xi^3(t),\ \eta=h(x,y,t)u$  map a linear
equation onto another linear equation
  with various different coefficient functions.
\end{corollary}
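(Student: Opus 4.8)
I would give two arguments: a short one relying on the preceding theorem, and a direct one that also exhibits the new coefficient functions.

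\emph{First argument.} First I would integrate the autonomous Lie system \eqref{system} with the stated generators. Since $\xi^1,\xi^2$ contain no $u$ and $\xi^3=\xi^3(t)$, the equations for $\bar x,\bar y,\bar t$ close among themselves and yield $\bar t=\tau(t;\epsilon)$, $\bar x=\phi(x,y,t;\epsilon)$, $\bar y=\psi(x,y,t;\epsilon)$ with no $u$-dependence, while $d\bar u/d\epsilon=h(\bar x,\bar y,\bar t)\,\bar u$ is linear and homogeneous in $\bar u$, so $\bar u=\lambda(x,y,t;\epsilon)\,u$ with $\lambda\neq0$. Where the map is a local diffeomorphism --- true near $\epsilon=0$, since $\dot\tau\neq0$ and the $(x,y)\mapsto(\bar x,\bar y)$ Jacobian is nonsingular --- its inverse has the same shape: $x,y,t$ are functions of $\bar x,\bar y,\bar t$ alone and $u=\lambda^{-1}\bar u$. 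Then, if such a transformation mapped a linear member of \eqref{main} onto a nonlinear one, applying the inverse would map a nonlinear diffusion equation onto a linear one by a transformation whose new independent variables do not involve the (old) dependent variable --- contradicting the necessary condition in the theorem that characterises linearisable nonlinear equations. Hence the image equation must be linear; that its coefficients differ is already visible from the explicit forms $\bar x=\phi,\ \bar y=\psi,\ \bar u=\lambda u$.

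\emph{Second (constructive) argument.} Alternatively I would substitute $u=\lambda^{-1}(x,y,t)\,\bar u$ together with the $u$-free change of independent variables directly. By the chain rule each of $u_t,u_x,u_y,u_{xx},u_{xy},u_{yy}$ becomes a linear combination of $\bar u$ and its first- and second-order derivatives with coefficients depending on $(x,y,t)$, hence on $(\bar x,\bar y,\bar t)$. Inserting these into $u_t=f_x+g_y$ --- whose right-hand side, for a linear equation, is a second-order linear operator in $u$ with $(x,y,t)$-coefficients --- produces a linear relation among $\bar u$ and its derivatives; since the coefficient of $\bar u_{\bar t}$ equals $\lambda^{-1}\dot\tau\neq0$ it can be isolated, giving $\bar u_{\bar t}=\bar f_{\bar x}+\bar g_{\bar y}$ with $\bar f,\bar g$ affine in $(\bar u,\bar u_{\bar x},\bar u_{\bar y})$, i.e.\ a linear equation with new coefficient functions read off from the substitution. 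This is reflected in the prolongation formulas \eqref{generators}: with $\xi^1_u=\xi^2_u=0$ and $\eta=hu$ the quadratic terms $\xi^1_u(u_x)^2$, $\xi^2_u u_x u_y,\dots$ drop out, so $\zeta^1,\zeta^2,\zeta^3$ become linear in $(u,u_x,u_y,u_t)$ and $\mu^1,\mu^2$ affine in $(f,g,u_x,u_y)$.

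I expect the bookkeeping in the second argument to be the only delicate point: one must check that the chain rule introduces no $u$-nonlinearity --- automatic, since the independent-variable change is $u$-free and the dependent-variable change is linear --- and that $\bar u_{\bar t}$ can indeed be solved for, which is precisely where $\dot\tau\neq0$ is used. Everything else is routine.
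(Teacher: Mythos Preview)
Your proposal is correct. The paper itself does not supply a separate proof of this corollary; it is stated as an immediate consequence of the analysis in the proof of Theorem~2, in particular the relation \eqref{result}, which forces $\eta$ to be linear in $u$ once $\xi^1,\xi^2$ are $u$-independent, together with the necessity direction of Theorem~2.

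Your first argument is exactly in this spirit: you integrate the generators to see that $\bar x,\bar y,\bar t$ do not involve $u$ and $\bar u$ is linear in $u$, then invoke Theorem~2 (via the inverse transformation) to rule out a nonlinear image. That is essentially the intended reasoning, only made explicit. Your second, constructive argument goes beyond what the paper does: the paper never writes out the transformed coefficients, whereas you outline how the chain rule, applied to a $u$-free change of independent variables composed with $u\mapsto\lambda^{-1}\bar u$, keeps every term linear in $\bar u$ and its derivatives. This is a genuine addition and makes the ``various different coefficient functions'' clause concrete. The only caveat is that you should be careful to phrase the contradiction in the first argument so that it matches the direction actually proved in Theorem~2 (nonlinear $\to$ linear requires $\phi_u\neq0$ or $\psi_u\neq0$); you do this correctly by passing to the inverse map.
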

Readers may see that equations \eqref{generators} generate all set of admissible equivalence transformations, related to the general (2+1)dimensional diffusion equation  \eqref{main}. The equivalence transformations for some particular members of the family of \eqref{main} can be evaluated from these by integrating the system of equations \eqref{system}.

\section{Applications to exact solutions of nonlinear diffusion equations}
Theorem 2 also refers that the admissible equivalence transformations  related to the diffusion equation \eqref{main} may involve some arbitrary  functions.  As a consequence of this result, in addition to the maps between  single linear and  nonlinear equations,  one can say that transformations between  a  single linear equation and a class of nonlinear equations can also be constructed. That is to say we can set such transformations  which map a single linear equation into a particular family of nonlinear equations or vice versa.
 $$
 f_x+g_y-u_t=0 \longleftrightarrow \bar f_{k_{ \bar x}}+\bar g_{k_{ \bar y}}-\bar u_{k_{ \bar t}}=0
 $$
 where the subscript $k$ denotes  the class of equations.
\par 
 In this section  we shall investigate some applications to that circumstance and study such maps  by considering  the transformations involving some arbitrary differentiable functions. Hereby solutions of some particular family of nonlinear equations can be obtained from an appropriate  linear equation.

  \subsection{On the subgroup: $\xi^1=m(u),\ \xi^2=h(u), \ \xi^3=\eta=0$  }
Here we shall examine a subgroup of the admissible equivalence groups generated by the infinitesimal generators: $\xi^1=m(u),\ \xi^2=h(u), \ \xi^3=\eta=0$  where $m(u)$ and $h(u)$ are arbitrary differentiable continuous functions.
 Then the prolonged vector field \eqref{prolonged} can be written via \eqref{generators} as follows
 \be
 \tilde{V}=m(u)\frac{\partial}{\partial x}+h(u)\frac{\partial}{\partial y}+(h'(u)f-m'(u)g)u_y\frac{\partial}{\partial f}+(m'(u)g-h'(u)f)u_x\frac{\partial}{\partial g}+\cdots
 \ee
 By integrating the system of equations \eqref{system} under the initial conditions \eqref{initial} we have the following class of transformations for the subgroup
 \be
 \label{case1}
 \begin{split}
 \bar x&=x-\epsilon m(u),\quad \bar y=y-\epsilon h(u),\quad \bar t=t,\quad \bar u=u,\\
 \bar u_x&=\frac{u_x}{1-\epsilon(u_x m'(u)+u_yh'(u))},\quad \bar u_y=\frac{u_y}{1-\epsilon(u_xm'(u)+u_yh'(u))},\\
  \bar u_t&=
\frac{u_t}{1-\epsilon(u_xm'(u)+u_yh'(u))},\\
 \bar f&=\frac{(1- \epsilon u_x  m'(u))f-\epsilon u_y  m'(u)g}{1-\epsilon(u_xm'(u)+u_yh'(u))},\quad
 \bar g=\frac{(1-\epsilon u_y h'(u))g-\epsilon u_xh'(u)f}{1-\epsilon(u_xm'(u)+u_yh'(u))}.
 \end{split}
 \ee
Note that here $\epsilon$  is the group parameter.  By substituting $u_x$ and $ u_y$ in terms of the transformed variables,
 $\bar f$ and $\bar g$   can now be written as
\be
\label{transformed1}
 \bar f=\left(1+\epsilon \bar u_{\bar y} h'(\bar u)\right)\tilde f-\epsilon \bar u_{\bar y}m'(\bar u)\tilde g,\quad
  \bar g= \left(1+\epsilon \bar u_{\bar x} m'(\bar u)\right)\tilde g-\epsilon \bar u_{\bar x}h'(\bar u)\tilde f
 \ee
 where $\tilde f(\bar x,\bar y,\bar t,\bar u,\bar u_{\bar x},\bar
u_{\bar y})$ and $\tilde g(\bar x,\bar y,\bar t,\bar u,\bar u_{\bar
x},\bar u_{\bar y})$  represent $f$ and $g$ in terms of the
transformed variables.\\
\par 
It is clear that for every different choice of $m(u),\ h(u)$ and linear functions $f$ and
 $g$ in the dependent variable $u$ and its derivatives, the transformations \eqref{transformed1}  map a single linear differential equation $f_x+g_y-u_t=0$  onto a class of  nonlinear equations of the form 
  $ \bar f_{k_{ \bar x}}+\bar g_{k_{ \bar y}}-\bar u_{k_{ \bar t}}=0$, and  a solution of the linear equation $$\phi(x,y,t,u)=0$$ generates  solutions to the corresponding nonlinear
 equations as
 $$\phi_k(\bar x+\epsilon m(\bar u),\bar y+\epsilon h(\bar u),\bar t,\bar u)=0.$$

 {\bf{Example 1:}} As a particular example, we search the nonlinear diffusion equations of the class (1) that can be mapped onto the very well known heat equation. The (2+1)  dimensional  heat equation can be represented as a member of \eqref{main} by choosing   $ f=u_x,\ g=u_y$;
   \be
\label{heat}
u_{xx}+u_{yy}= u_t.
\ee
The transformed form of $f=u_x$ and $g=u_y$ can be obtained  from \eqref{transformed1} as
 \bes
  \bar f=\frac{(1+\epsilon h'(\bar u)\bar u_{\bar y} )\bar u_{\bar x}-\epsilon m'(\bar u)\bar u_{\bar y}^2}{1+\epsilon h'(\bar u)\bar u_{\bar y} +\epsilon m'(\bar u)\bar u_{\bar x}},\quad
  \bar g=\frac{(1+\epsilon m'(\bar u)\bar u_{\bar x} )\bar u_{\bar y}-\epsilon h'(\bar u)\bar u_{\bar x}^2}{1+\epsilon h'(\bar u)\bar u_{\bar y} +\epsilon m'(\bar u)\bar u_{\bar x}}
 \ees
which generate the following class of nonlinear equations
 \bea
 \label{nonlinear1}
  A \ \bar u_{\bar x\bar x}+ B \ \bar u_{\bar x \bar y} +C \ \bar u_{\bar y\bar y} + D = E \ \bar u_{\bar t}
 \eea
 where the coefficients are 
 \bea
 \begin{split}
& A = 1+2 \epsilon h'(\bar u)\bar u_{\bar y}+\epsilon^2\left(h'(\bar u)^2+m'(\bar u)^2\right)\bar u_{\bar y}^2\\
& B = -2
 \epsilon \left[m'(\bar u)\bar u_{\bar y}+h'(\bar u)\bar u_{\bar x}+\epsilon\left(h'(\bar u)^2+m'(\bar u)^2\right)\bar u_{\bar y}\bar u_{\bar x}\right]
\\
& C= 1+2\epsilon  m'(\bar u)\bar u_{\bar x}
 + \epsilon^2\left(h'(\bar u)^2+m'(\bar u)^2\right)\bar u_{\bar x}^2 \\
& D =  -\epsilon
 \left(m''(\bar u)\bar u_{\bar x}+h''(\bar u)\bar u_{\bar y}\right) \left( \bar u_{\bar x}^2 + \bar u_{\bar y}^2\right) \\
  & E= \left(1+\epsilon(m'(\bar u)\bar u_{\bar x}+h'(\bar u)\bar u_{\bar y})\right)^2
 \end{split}
 \eea
It is obvious that via the  transformations of \eqref{case1}, we can construct  various  maps between   the single heat equation \eqref{heat} and the nonlinear  equations of  the class of  \eqref{nonlinear1}. And any solution of the heat equation \eqref{heat} would generate solutions of nonlinear equations. 
\par 
For the reader to see more clearly how a solution to  nonlinear problems can be constructed via  equivalence transformations we consider the following problem.\\
 {\bf Example 2:}
A subclass of nonlinear diffusion equations \eqref{nonlinear1} can be considered by taking  $m(u)=u^n,\ h(u)=u^r$ in the previous example, where $n,\ r \in \mathbb{R}$.  
\be
\label{un}
\begin{split}
& \left[ 1+2 \epsilon r u^{r-1} u_y + + \epsilon^2 (r^2 u^{2(r-1)}+n^2 u^{2(n-1)}) u_y^2\right] u_{xx} 
 -2 \epsilon \left[ n u^{n-1} u_y \right. \\
 &  \left. +r u^{r-1} u_x + \epsilon (r^2 u^{2(r-1)}+n^2 u^{2(n-1)})u_x u_y\right] u_{xy} + \left[ 1+ 2 \epsilon n u^{n-1} u_x \right. \\
 & \left.  + \epsilon^2 (r^2 u^{2(r-1)}+n^2 u^{2(n-1)})u_x^2 \right] u_{yy} 
  -\epsilon (u_x^2+u_y^2) (r(r-1) u^{r-2} u_y \\
  & + n(n-1) u^{n-2} u_x )  = (1+\epsilon (n u^{n-1} u_x+ r u^{r-1} u_y))^2.
\end{split}
\ee
Here $\bar {(*)} $'s are omitted for the simplicity. And an implicit solution for \eqref{un} can be written as 
\bes
 u-  \bar F \sin{( \mu ( x +\epsilon u^n ))} \sin{( \eta (y+ \epsilon u^r)} e^{-(\mu^2+\eta^2) t}=0
\ees
from a solution like $ 
 u= F \sin{( \mu x )} \sin{( \eta y)} e^{-(\mu^2+\eta^2)t}$ 
 of the heat equation \eqref{heat} by applying the equivalence transformations; $x= \bar x+ \epsilon\bar u^n,\ y=\bar y+ \epsilon \bar u^r ,\ t=\bar t,\ u=\bar u$.   Here $F$ and $\bar F$ are some constants.\\
 {\bf Example 3}: For the reader to follow the procedure better and  check the calculations, let us  consider another particular member of the nonlinear equation \eqref{nonlinear1} by taking $m(u)=u^2$ and $h(u)=0$, where we could simply write
 \be
 \label{example3}
\begin{split}
 &(1+\bar u^2\bar u_{\bar y}^2)\bar u_{\bar x \bar x}
 -2(1+ \bar u\bar u_{\bar x})\bar u\bar u_{\bar y}\bar u_{\bar x\bar y}
  +(1+\bar u\bar u_{\bar x})^2
 \bar u_{\bar y\bar y} 
-(\bar u_{\bar y}^2+\bar u_{\bar x}^2)\bar u_{\bar x}
 \\
&\qquad  =(1+\bar u\bar u_{\bar x})^2\bar u_{\bar t}.
 \end{split}
 \ee
The reader can easily see, a  much more easier function $  u=x+ y^2+2  t$ which satisfies the linear heat equation would generate     
  \bes \bar u=1 \mp \sqrt{1-2( \bar
x+\bar y^2 +2 \bar t))}
\ees
as a solution to the nonlinear equation \eqref{example3} under the transformations $x= \bar x+ \frac{1}{2}\bar u^2,\ y=\bar y,\ t=\bar t,\ u=\bar u$.  \\
 {\bf Example 4:} As another example by 
  choosing  $m(u)=\sin u,\ h(u)=\cos u$ the nonlinear equation would become
  \bea
\begin{split}
&(1+ \bar u_{\bar y}^2-2 \bar u_{\bar y}\sin \bar u )\bar u_{\bar x\bar x}+(1+ \bar u_{\bar x}^2+2\bar u_{\bar x}\cos \bar u )\bar u_{\bar y\bar y}\\
&-2(\bar u_{\bar y}\cos \bar u - \bar
u_{\bar x}\sin \bar u+ \bar u_{\bar x}\bar u_{\bar y}) \bar
u_{\bar x\bar y}+ (\bar u_{\bar y}\cos \bar u +\bar u_{\bar x}\sin
\bar u )(
\bar u_{\bar y}^2+\bar u_{\bar x}^2)\\
&-(1- \bar u_{\bar y} \sin \bar u +\bar u_{\bar x} \cos \bar u )^2 \bar u_{\bar t}=0.
\end{split}
\eea 
can be transformed into the heat equation \eqref{heat} by the equivalence transformations: 
$\bar x=x- sin u,\ \bar y=y- cos u,\ \bar u=u$. Any solution of the linear heat equation would generate a solution of the equation. 
 \subsection{On the subgroup:  $\xi^1= m(u,y),\ \xi^2=\xi^3=\eta=0$}
 Similarly to the previous subsection, transformations related to a subgroup generated by the infinitesimal generators    $\xi^1=m(u,y),\ \xi^2=\xi^3=\eta=0$ are considered. The corresponding  equivalence transformations of the subgroup are explicitly determined as: 
  \beas
 \begin{split}
 \bar x&=x-\epsilon m(u,y),\quad \bar y=y,\quad \bar t=t,\quad \bar u=u,\\
 \bar u_x&=\frac{u_x}{1-\epsilon m_{u}(u,y)u_x}, \quad \bar u_y=\frac{u_y+\epsilon m_{y}(u,y)u_x }{1-\epsilon m_{u}(u,y)u_x},\quad
  \bar u_t=
\frac{u_t}{1-\epsilon m_{u}(u,y)u_x},\\
 \bar f&=f-\frac{ \epsilon (m_{y}(u,y)+m_{u}(u,y)u_y)g} {1-\epsilon m_{u}(u,y)u_x}  ,\quad
  \bar g=\frac{g}{1-\epsilon m_{u}(u,y)u_x}
 \end{split}
 \eeas
 where subscripts indicate partial differentiation with respect to
the corresponding variable.
 \\
{\bf Example 5:}
 For the simplicity, let us again seek the nonlinear diffusion equations can be mapped onto the linear heat equation \eqref{heat}. For $f=u_x$ and $g=u_y$,  
 $\bar f$ and $\bar g$   turn to be:
  \bes \bar f=\frac{\bar u_{\bar x}-\epsilon \left( m_{\bar
y}+m_{\bar u}\bar u_{\bar y}\right)\left(\bar u_{\bar y}-\epsilon
m_{\bar y}\bar u_{\bar x}\right)}{1+\epsilon m_{\bar u}\bar u_{\bar
x}},\qquad \bar g=\bar u_{\bar y}-\epsilon m_{\bar y}\bar u_{\bar x}
\ees where $m=m(\bar u,\bar y)$. Thus the following class are the class of nonlinear equations that mapped onto the heat equation: 
\bes
 A \ \bar u_{\bar x\bar x}+ B \ \bar u_{\bar x \bar y} +C \ \bar u_{\bar y\bar y} + D = E \ \bar u_{\bar t}\ \longrightarrow
 \ u_{xx}+u_{yy}=u_t
 \ees
where the coefficients are
 \bea
\label{nonlinear2}
\begin{split}
& A = 1+\epsilon^2 m_{\bar y}^2+3\epsilon^2 m_{\bar u}m_{\bar y} \bar u_{\bar y}+2 \epsilon^2 m_{\bar u}^2 \bar u_{\bar y}^2\\
& B = \epsilon\left[\epsilon^2 m_{\bar u}^2 \bar u_{\bar x}^2( m_{\bar u}\bar u_{\bar y}+ m_{\bar y})
-2 \epsilon m_{\bar u}^2 \bar u_{\bar x}\bar u_{\bar y}- \epsilon m_{\bar u} m_{\bar y}\bar u_{\bar x}-3  m_{\bar u}\bar u_{\bar y} -2 m_{\bar y}\right]\\
& C = 1+\epsilon m_{\bar u}\bar u_{\bar x}-\epsilon^2 m_{\bar u}^2\bar u_{\bar x}^2-\epsilon^3 m_{\bar u}^3 \bar u_{\bar x}^3\\
& D = \epsilon \left[2 \epsilon^2 m_{\bar u} m_{\bar y}
m_{\bar u\bar y}-\epsilon^2 m_{\bar u}^2 m_{\bar y\bar y}-m_{\bar u\bar u}-\epsilon^2 m_{\bar y}^2
 m_{\bar u\bar u}\right]\bar u_{x}^3 +\epsilon^2\left[-2 m_{\bar y\bar y}m_{\bar u} \right. \\
 &\quad  \left. +2 m_{\bar u\bar y}m_{\bar y}-3 m_{\bar u\bar y}m_{\bar u}
 \bar u_{\bar y}+3 m_{\bar u\bar u}m_{\bar y}\bar u_{\bar y}\right]\bar u_{x}^2 
 -2 \epsilon m_{\bar u\bar u}\bar u_{\bar x}\bar u_{\bar y}^2\\
 & \quad -3\epsilon m_{\bar u\bar y}\bar u_{\bar x}\bar u_{\bar y}-
 \epsilon m_{\bar y\bar y}\bar u_{\bar x}\\
 & E = (1+\epsilon m_{\bar u}\bar u_{\bar x})^2
 \end{split}
 \eea
Any solution $
 \phi(x,y,t,u)=0$ of the heat equation \eqref{heat} generates hereby the solution $\phi(\bar
x+\epsilon m(\bar u,\bar y),\bar y,\bar t,\bar u)=0$ for the nonlinear equation \eqref{nonlinear1} with the coefficients \eqref{nonlinear2}. 
 \subsection{On the subgroup: $\xi^2=m(u,x),\ \xi^1=\xi^3=\eta=0$}
 Completely similar to the previous subsections, the infinitesimal generators  on the vector field of \eqref{prolonged}  
 yield the transformations
 \bea
 \label{case3}
 \begin{split}
 \bar x&=x,\quad \bar y=y-\epsilon m(u,x),\quad
  \bar t=t,\quad \bar u=u,\\
 \bar u_x&=\frac{u_x+\epsilon m_x(u,x)u_y}{1-\epsilon m_{u}(u,x)u_y}, \quad
  \bar u_y=\frac{u_y }{1-\epsilon m_{u}(u,x)u_y},\quad
  \bar u_t=\frac{u_t}{1-\epsilon m_{u}(u,x)u_y},\\
 \bar f&=\frac{ f} {1-\epsilon m_{u}(u,x)u_y}  ,\quad
 \bar g=g-\frac{\epsilon(m_x(u,x)+\epsilon m_u(u,x)u_x)g}{1-\epsilon m_{u}(u,x)u_y}.
 \end{split}
 \eea
 As the procedure is already given in detail in the previous subsections with many examples, here we will  only consider the following  example which is a little different than the previous ones. \\
{\bf Example 6:}
 Even though the first order PDE, $$u_x+u_y-u_t=0$$ is not a diffusion equation, we may still examine it under the  equivalence
 transformations generated for the diffusion equation \eqref{main} by  taking  $f=g=u$. The following class of  nonlinear equations are mapped onto the given constant coefficient PDE under the transformation group \eqref{case3}:  
\be\label{firstorder}
\bar u_{\bar x}+(1-\epsilon m_{\bar x}(\bar u,\bar x))\bar
u_{\bar y}-\bar u_{\bar t}=0.
\ee
And a solution to the equation \eqref{firstorder} can be written as
$$ \bar
u-\psi(\bar t+\bar x,\ \bar y+\epsilon m(\bar u,\bar x)-\bar x)=0 $$
from the general solution of the linear equation
$u=\psi(t+x,y-x)$. Here we should warn the reader that even though the solution of the linear equation is its general solution, the transformed solution is not the general solution to the nonlinear equation.
\subsection{On the subgroup: $\xi^1=\xi^2=\xi^3=0, \ \eta=m(x,y,t)$ }
As the last case, here we shall consider  the equivalence
transformations of linear or nonlinear equations which do not map between each other, but map one into another  with different
coefficient functions. We can construct many of such transformations, but as an example  here will investigate one subgroup by taking the infinitesimal generators  drive the following
 equivalence transformations
 \beas
 \begin{split}
 \bar x&=x,\quad \bar y=y,\quad \bar t=t,\quad \bar u=u+\epsilon m(x,y,t),\\
 \bar u_x&=u_x+\epsilon m_x(x,y,t), \quad \bar u_y=u_y+\epsilon m_{y}(x,y,t),\quad  \bar u_t=u_t+\epsilon m_{t}(x,y,t),\\
 \bar f&= f+\epsilon \int m_{t}(x,y,t)dx  ,\qquad  \bar g=g.
 \end{split}
 \eeas
 Under such transformations $f_x+g_y-u_t=0$ is mapped onto
\bes
 \left[f+\epsilon \int m_{t}(x,y,t)dx\right]_x+g_y-
u_t=\epsilon m_t(x,y,t). \ees
\par
One can say by considering such transformations we may address the nonhomegeneous equations to homogeneous equations. \par
It is obviously clear that one  investigate more subgroups of the general equivalence groups. And for each subgroup, various different maps between linear and nonlinear diffusion equations can be examined. But we hope the subgroups and examples examined here will suffice to give some idea to the general structure on the maps between nonlinear diffusion equations and  linear equations.   
  \section{Differential Invariants}
Differential Invariants of Lie groups of continuous transformations
play important role in mathematical modelling, differential geometry
and nonlinear field equations. In recent years, differential
invariants admitting equivalence transformations have been mostly
studied for constructing  maps between linear and nonlinear
differential equations. The reader may look at \cite{tracina2004invariants} and
\cite{torrisi2004linearization} for the application of differential invariants to the
linearization problem for nonlinear wave equation.  The
linearization problem via differential invariants for one
dimensional diffusion equation was investigated in \cite{torrisi2005second,gandarias2007some,
ibragimov2007differential,torrisi2011exact}. Recently the problem is
studied for third order evolution equation by Tsaousi et al. in
\cite{tsaousi2015differential}. Like the equivalence transformation for the general class of diffusion equation \eqref{main} has not been studied yet its differential invariants have  not been considered in any research as well. \par
 One can understand that determining the differential invariants for the complete group of equivalence transformations for the  equation \eqref{main} is almost impossible. Thus  we will here investigate the
differential invariants  admitting the special
subgroups of equivalence transformations represented in section 3.1 and discuss about the results.
\subsection {Differential Invariants  for the Subgroup: $ \xi^1=m(u), \xi^2=h(u)$}
The subgroup which are  examined in Section 3.1 for
the class of  (2+1) dimensional diffusion equation
(1) is infinite dimensional and spanned by  the vector fields \bea
\label{vmvh}
\begin{split}
V_m &= m(u) \frac{\partial }{{\partial x }} +m'(u)\left[u_x^2 \frac{\partial}{\partial {u_x}} +u_xu_y \frac{\partial}{\partial {u_y}}+u_xu_t \frac{\partial}{\partial {u_t}}-u_yg  \frac{\partial }{{\partial f }}+ u_xg \frac{\partial }{{\partial g }}\right],\\
V_h &= h(u) \frac{\partial }{{\partial y }} +h'(u)\left[u_xu_y \frac{\partial}{\partial {u_x}} +u_y^2 \frac{\partial}{\partial {u_y}}+u_yu_t \frac{\partial}{\partial {u_t}}+u_yf  \frac{\partial }{{\partial f }}-u_xf \frac{\partial }{{\partial g }}\right]
\end{split}
  \eea
where
  $$[V_m,V_h]=V_m(V_h)-V_h(V_m)=0.$$
  \begin{definition} A function $$J=J(x,y,t,u,u_x,u_y,u_t,f,g)$$ is called the  invariant of order zero of the (2+1) dimensional diffusion equation  \eqref{main}, if it is
    invariant under the equivalence groups $V_m$ and $ V_h$ given by \eqref{vmvh}.
    \end{definition}
Invariance conditions  $V_m(J)=0$ and $V_h(J)=0$ yield the invariant function of order zero to be
  $$J=J(t,u,\xi_1,\xi_2,\xi_3)$$
  where \be
  \label{inv0}
  \xi_1=\frac{u_y}{u_x},\quad \xi_2=\frac{u_t}{u_x},\quad
  \xi_3=f\frac{1}{\xi_1}+g.
  \ee
One can easily see that these differential invariants are consistent with the results \eqref{case1} obtained by the  direct integration method
  $$\frac{u_y}{u_x}=\frac{\bar u_{\bar y}}{\bar u_{\bar x}},\qquad \frac{u_t}{u_x}=\frac{\bar u_{\bar t}}{\bar u_{\bar
  x}}, \qquad f\frac{u_x}{u_y}+g=\bar f\frac{\bar u_x}{\bar u_y}+\bar g.$$
 \begin{definition}
 A function   \bes
J(x,y,t,u,u_x,u_y,u_t,f,g,f_x,f_y,f_t,f_u,f_{u_x},f_{u_y},g_x,g_y,g_t,g_u,g_{u_x},g_{u_y})
\ees
is called the invariant of order one for the diffusion equation \eqref{main}, if it is invariant under the equivalence groups related to the prolonged vector fields $\tilde V_m$ and $\tilde V_h$.
 \end{definition}
Computation of  the differential invariants of order one  for this subgroup too complicated as they involve two free functions. For the simplicity, here we
will examine the procedure finding the first order invariants by
considering the subgroup in which $h(u)=0$.
\par 
 In addition to $V_m$ given by
\eqref{vmvh}, substituting the additional components of the vector field \eqref{prolonged} which determined by \eqref{gen2} yields the
prolongation vector $\tilde {V}_m$ to be
 \bea
\begin{split}
\widetilde{V}_m &=
-\left[m'(u)\left(f_x+u_yg_u\right)+m''(u)\left(u_x^2f_{u_x}+u_xu_yf_{u_y}\right)\right]\frac{\partial
}{{\partial f_u }}\\
&- \left[m'(u)\left(g_x-u_x g_u\right)+m''(u)\left(u_x^2g_{u_x}+u_xu_y g_{u_y}-u_x g\right)\right] \frac{\partial}{\partial g_u}\\
&-m'(u)u_yg_x \frac{\partial}{\partial f_x}-m'(u)u_yg_y
\frac{\partial}{\partial f_y}-m'(u)u_yg_t \frac{\partial}{\partial f_t}+m'(u)u_xg_x \frac{\partial}{\partial g_x}\\
&+ m'(u) u_xg_y \frac{\partial }{\partial g_y}+m'(u) u_xg_t
\frac{\partial }{\partial g_t}-m'(u)\left(2
u_xf_{u_x}+u_yf_{u_y}+u_yg_{u_x}\right) \frac{\partial}{\partial
f_{u_x}}\\
&-m'(u)\left( u_xf_{u_y}+u_yg_{u_y}+g\right)\frac{\partial}{\partial
f_{u_y}} -m'(u)\left( u_xg_{u_x}+u_yg_{u_y}-g\right)
\frac{\partial}{\partial g_{u_x}}.
\end{split}
  \eea
  Since $m(u)$ is an arbitrary function, we apply the invariant test $$V_m(J)=0,\qquad
  \widetilde{V}_m(J)=0$$  and obtain the  invariant function of order one as a function depending on 15 invariants
    $$J(y,t,u,\xi_i),\qquad i=1,2,...13$$
    where
    \bea
  \begin{split}
& \xi_1=\frac{u_y}{u_x},\qquad \xi_2=\frac{u_t}{u_x},\qquad
\xi_3=\frac{g}{u_x},\qquad
 \xi_4=f+\xi_1 g,\\
& \xi_5=g_x+\frac{1}{\xi_1}f_x,\qquad
\xi_6=\frac{g_y}{g_x}\frac{1}{\xi_1},\qquad
\xi_7=f_y-\frac{g_y}{g_x}f_x,\\&
\xi_8=\frac{g_t}{g_x}\frac{1}{\xi_1},\qquad
 \xi_9=f_t-\frac{g_t}{g_x}f_x,\qquad  \xi_{10}=g_{u_y}, \\
&
\xi_{11}=-(\xi_1\xi_{10}+\xi_3)f_x+f_{u_y}g_x \xi_1,\qquad \xi_{12}=(\xi_3-\xi_1g_{u_y})f_x+\xi_1g_xg_{u_x},\\
&\xi_{13}=\left[\left(g_{u_y}f_{x}- g_{u_x}g_{x}\right)f_x+\left(f_{u_x}g_{x}-f_{u_y}f_x\right)g_{x}\right]\xi_1^2.
  \end{split}
  \eea
 A symbolic software is used to compute the differential invariants. Differential invariants related to the prolonged vector field $\tilde V_h$ can also be obtained but because the procedure is the same we neglect that part. 
  \section{Conclusion and Remarks}
  In this work we considered the equivalence transformations for a general (2+1) dimensional diffusion equation with no restriction on the functional dependencies of  free functions. The goal was to construct the most general infinitesimal generators for the transformations associated  equivalence group and investigate the structure of admissible transformations between linear and nonlinear equations. We showed that such transformations were only possible when the transformed independent variables involve  the dependent variable. Similar analysis can either be applied to some smaller classes directly or the results obtained here can be applied by appropriate restrictions.

  Second, we considered some subgroups and chose some particular transformations to generate maps between linear and nonlinear equations and we determined the class of nonlinear diffusion equations that can be mapped onto the linear heat equation. We have not interested in algebraic structure and the classification problem. Classification for some members of the diffusion equation can be a subject of another study.

  Third, in the last section we investigated the  differential invariants of order zero and of order one for a subgroup which was examined in the previous section. And were able to show that the zeroth order differential invariants were compatible with the results we obtained by direct integration method.  The determination of the differential invariants can easily be extended by taking some other particular subgroups by running the similar calculations.

    \bibliographystyle{unsrt}
  \bibliography{nonlineardiffusioneq}{}

\begin{thebibliography}{10}

\bibitem{csuhubi2005explicit}
Erdo{\u{g}}an~S {\c{S}}uhubi.
\newblock Explicit determination of isovector fields of equivalence groups for
  balance equations of arbitrary order—part ii.
\newblock {\em International journal of engineering science}, 43(1):1--15,
  2005.

\bibitem{oliveri2012equivalence}
Francesco Oliveri and MP~Speciale.
\newblock Equivalence transformations of quasilinear first order systems and
  reduction to autonomous and homogeneous form.
\newblock {\em Acta applicandae mathematicae}, 122(1):447--460, 2012.

\bibitem{ozer2003equivalence}
S~{\"O}zer and E~{\c{S}}uhubi.
\newblock Equivalence groups for first-order balance equations and applications
  to electromagnetism.
\newblock {\em Theoretical and mathematical physics}, 137(2):1590--1597, 2003.

\bibitem{ovsiannikov2014group}
Lev~Vasil'evich Ovsiannikov.
\newblock {\em Group analysis of differential equations}.
\newblock Academic Press, 2014.

\bibitem{olver2000applications}
Peter~J Olver.
\newblock {\em Applications of Lie groups to differential equations}, volume
  107.
\newblock Springer Science \& Business Media, 2000.

\bibitem{ibragimov1999elementary}
Nail'~Kha{\u\i}rullovich Ibragimov.
\newblock {\em Elementary Lie group analysis and ordinary differential
  equations}, volume 197.
\newblock Wiley New York, 1999.

\bibitem{suhubi2013exterior}
Erdogan Suhubi.
\newblock {\em Exterior Analysis: Using Applications of Differential Forms}.
\newblock Elsevier, 2013.

\bibitem{lie1897uber}
Sophus Lie.
\newblock Uber integralinvarianten und ihre verwertung f{\"u}r die theorie der
  differentialgleichungen, leipz.
\newblock {\em Berichte}, 49:369--410, 1897.

\bibitem{ovsiannikov1959group}
LV~Ovsiannikov.
\newblock Group relations of the equation of non-linear heat conductivity.
\newblock In {\em Dokl. Akad. Nauk SSSR}, volume 125, pages 492--495, 1959.

\bibitem{sophocleous2003symmetries}
Christodoulos Sophocleous.
\newblock Symmetries and form-preserving transformations of generalised
  inhomogeneous nonlinear diffusion equations.
\newblock {\em Physica A: Statistical Mechanics and its Applications},
  324(3):509--529, 2003.

\bibitem{johnpillai2001singular}
IK~Johnpillai and FM~Mahomed.
\newblock Singular invariant equation for the (1+ 1) fokker? planck equation.
\newblock {\em Journal of Physics A: Mathematical and General}, 34(49):11033,
  2001.

\bibitem{lisle1992equivalence}
Ian Lisle.
\newblock {\em Equivalence transformations for classes of differential
  equations}.
\newblock PhD thesis, University of British Columbia, 1992.

\bibitem{ivanova2010lie}
NM~Ivanova, C~Sophocleous, and R~Tracina.
\newblock Lie group analysis of two-dimensional variable-coefficient burgers
  equation.
\newblock {\em Zeitschrift f{\"u}r Angewandte Mathematik und Physik (ZAMP)},
  61(5):793--809, 2010.

\bibitem{torrisi2007class}
M~Torrisi and R~Tracin{\`a}.
\newblock On a class of reaction diffusion systems: Equivalence transformations
  and symmetries.
\newblock {\em Asymptotic Methods in Nonlinear Wave Phenomena; Ruggeri, T.,
  Sammartino, M., Eds}, pages 207--216, 2007.

\bibitem{popovych2004new}
Roman~O Popovych and Nataliya~M Ivanova.
\newblock New results on group classification of nonlinear diffusion?
  convection equations.
\newblock {\em Journal of Physics A: Mathematical and General}, 37(30):7547,
  2004.

\bibitem{bruzon2012some}
MS~Bruz{\'o}n, ML~Gandarias, M~Torrisi, and R~Tracina.
\newblock On some applications of transformation groups to a class of nonlinear
  dispersive equations.
\newblock {\em Nonlinear Analysis: Real World Applications}, 13(3):1139--1151,
  2012.

\bibitem{akhatov1991nonlocal}
I~Sh Akhatov, RK~Gazizov, and N~Kh Ibragimov.
\newblock Nonlocal symmetries. heuristic approach.
\newblock {\em Journal of Soviet Mathematics}, 55(1):1401--1450, 1991.

\bibitem{ibragimov1991preliminary}
NH~Ibragimov, M~Torrisi, and A~Valenti.
\newblock Preliminary group classification of equations vtt= f (x, vx) vxx+ g
  (x, vx).
\newblock {\em Journal of Mathematical Physics}, 32(11):2988--2995, 1991.

\bibitem{torrisi1998equivalence}
M~Torrisi and R~Tracina.
\newblock Equivalence transformations and symmetries for a heat conduction
  model.
\newblock {\em International journal of non-linear mechanics}, 33(3):473--487,
  1998.

\bibitem{romano1999application}
Vittorio Romano and Mariano Torrisi.
\newblock Application of weak equivalence transformations to a group analysis
  of a drift-diffusion model.
\newblock {\em Journal of Physics A: Mathematical and General}, 32(45):7953,
  1999.

\bibitem{torrisi2004linearization}
Mariano Torrisi, Rita Tracin{\`a}, and Antonino Valenti.
\newblock On the linearization of semilinear wave equations.
\newblock {\em Nonlinear Dynamics}, 36(1):97--106, 2004.

\bibitem{sophocleous2008differential}
Christodoulos Sophocleous and R~Tracin{\`a}.
\newblock Differential invariants for quasi-linear and semi-linear wave-type
  equations.
\newblock {\em Applied Mathematics and Computation}, 202(1):216--228, 2008.

\bibitem{ibragimov2010quasi}
Nail~H Ibragimov, Mariano Torrisi, and Rita Tracin{\`a}.
\newblock Quasi self-adjoint nonlinear wave equations.
\newblock {\em Journal of Physics A: Mathematical and Theoretical},
  43(44):442001, 2010.

\bibitem{tracina2004invariants}
R~Tracina.
\newblock Invariants of a family of nonlinear wave equations.
\newblock {\em Communications in Nonlinear Science and Numerical Simulation},
  9(1):127--133, 2004.

\bibitem{torrisi2005second}
M~Torrisi and R~Tracina.
\newblock Second-order differential invariants of a family of diffusion
  equations.
\newblock {\em Journal of Physics A: Mathematical and General}, 38(34):7519,
  2005.

\bibitem{gandarias2007some}
ML~Gandarias, M~Torrisi, and R~Tracina.
\newblock On some differential invariants for a family of diffusion equations.
\newblock {\em Journal of Physics A: Mathematical and Theoretical},
  40(30):8803, 2007.

\bibitem{ibragimov2007differential}
Nail~H Ibragimov and C~Sophocleous.
\newblock Differential invariants of the one-dimensional quasi-linear
  second-order evolution equation.
\newblock {\em Communications in Nonlinear Science and Numerical Simulation},
  12(7):1133--1145, 2007.

\bibitem{torrisi2011exact}
M~Torrisi and R~Tracin{\`a}.
\newblock Exact solutions of a reaction--diffusion system for proteus mirabilis
  bacterial colonies.
\newblock {\em Nonlinear Analysis: Real World Applications}, 12(3):1865--1874,
  2011.

\bibitem{tsaousi2015differential}
C~Tsaousi, R~Tracin{\`a}, and C~Sophocleous.
\newblock Differential invariants for third-order evolution equations.
\newblock {\em Communications in Nonlinear Science and Numerical Simulation},
  20(2):352--359, 2015.

\end{thebibliography}

  \end{document}